\newcommand{\tabfig}{.15}
\title{Wooly Graphs : A Mathematical Framework for Knitting}
\author{ 
	\href{https://orcid.org/0009-0006-0600-4477}{\includegraphics[scale=0.06]{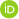}\hspace{1mm}Brian Bell} \\
	Department of Mathematics\\
	University of Arizona\\
	Tucson, AZ 85705 \\
	\texttt{bwbell@math.arizona.edu} \\
	\And
    \href{https://orcid.org/0000-0002-0621-5892}{\includegraphics[scale=0.06]{orcid.pdf}\hspace{1mm}Kathryn Gray} \\
	Department of Computer Science\\
	University of Arizona\\
	Tucson, AZ 85705 \\
	\url{http://rungray.github.io}\\
	\texttt{ryngray@arizona.edu} \\
	\And
	\href{https://orcid.org/0009-0003-7491-2811}{\includegraphics[scale=0.06]{orcid.pdf}\hspace{1mm}Diana Sieper} \\
	Institute of Mathematics\\
	Julius-Maximilians-Universität Würzburg\\
	Würzburg, Germany \\
	\texttt{marie.sieper@uni-wuerzburg.de} \\
	\And
	\href{https://orcid.org/0000-0002-0477-2724}{\includegraphics[scale=0.06]{orcid.pdf}\hspace{1mm}Stephen Kobourov} \\
	Technical University Munich\\
	Munich, Germany \\
	\url{https://www.professoren.tum.de/en/kobourov-stephen}\\
	\texttt{stephen.kobourov@tum.de} \\
	\And
	\href{https://orcid.org/0000-0002-9307-3254}{\includegraphics[scale=0.06]{orcid.pdf}\hspace{1mm}Falk Schreiber} \\
	Department of Computer and Information Science\\
	University of Konstanz\\
	Konstanz, Germany \\
	\texttt{falk.schreiber@uni-konstanz.de} \\
	\And
	\href{https://orcid.org/0000-0002-8345-5806}{\includegraphics[scale=0.06]{orcid.pdf}\hspace{1mm}Karsten Klein} \\
	Department of Computer and Information Science\\
	University of Konstanz\\
	Konstanz, Germany \\
	\texttt{karsten.klein@uni-konstanz.de} \\
	\And
	\href{https://orcid.org/0000-0003-1698-3868}{\includegraphics[scale=0.06]{orcid.pdf}\hspace{1mm}Seokhee Hong} \\
	School of Computer Science\\
	University of Sydney\\
	Sydney, Australia \\
	\texttt{seokhee.hong@sydney.edu.au} \\
}
\date{March 2024}
\newcommand{\problemtitle}[1]{\gdef\@problemtitle{#1}}% Store problem title
\newcommand{\probleminput}[1]{\gdef\@probleminput{#1}}% Store problem input
\newcommand{\problemquestion}[1]{\gdef\@problemquestion{#1}}% Store problem question
  \par\addvspace{.5\baselineskip}
  \par\addvspace{.5\baselineskip}
\newtheorem{theorem}{Theorem}
  \numberwithin{theorem}{section}
		\newtheorem{lemma}[theorem]{Lemma}
		\newtheorem{definition}[theorem]{Definition}
		\theoremstyle{definition}
		\theoremstyle{remark}
		\newtheorem{note}[theorem]{Note}
		\newtheorem*{note*}{Note}
		\newtheorem*{remark*}{Remark}
		\theoremstyle{claimstyle}
		\newtheorem*{claim*}{Claim}
\begin{document}

\maketitle

\begin{abstract}
% We introduce the concept of Wooly Graphs, an approach to exploring and representing knitting patterns through graphs. Knitting involves the interlocking of yarns in a methodical manner to create fabric. 
% Our research focuses on the application of graph theory to model knitting patterns, where stitches form the vertices and the yarn's path between stitches as well as the interconnection between layers represent the edges. 
% Through Wooly Graphs, we propose a bridge between the realms of graph theory and graph drawing and the creative craft of knitting, highlighting the potential for cross-disciplinary research. 
% This paper focuses on a model of knitting that uses graph theory. Knitting is an art form that consists of loops of yarn. We find that there is a simple way to convert knitting patterns to a representative graph. Determining whether a graph can be knit, however, is np-hard in general, although certain cases can make the determination easier. We discuss different knitting patterns in terms of the complexity of the graph it would take to model. 

This paper aims to develop a mathematical foundation to model knitting with graphs. We provide a precise definition for knit objects with a knot theoretic component and propose a simple undirected graph, a simple directed graph, and a directed multigraph model for any arbitrary knit object. Using these models, we propose natural categories related to the complexity of knitting structures. We use these categories to explore the hardness of determining whether a knit object of each class exists for a given graph. We show that while this problem is NP-hard in general, under specific cases, there are linear and polynomial time algorithms which take advantage of unique properties of common knitting techniques. This work aims to bridge the gap between textile arts and graph theory, offering a useful and rigorous framework for analyzing knitting objects using their corresponding graphs and for generating knitting objects from graphs.
\end{abstract}

\section{Introduction}

Knitting, an art form that weaves loops of yarn into intricate fabrics, has long fascinated both artists and mathematicians. The patterns created through knitting can be highly complex and often challenging to represent and analyze. Recent advances \cite{gray2024graphmodellayoutalgorithm, Counts_directed_2018} have explored the potential of graphs to model and understand these patterns, offering new perspectives and analytical tools. An example of the knitting, yarn path, and graph can be seen in figure~\ref{fig:wooly_graph}. This paper aims to bridge the gap in the literature by discussing methods from graph theory for identifying and categorizing graphs that represent knittable patterns. By defining and analyzing knitting graphs, we can determine whether a given graph corresponds to a knittable pattern. We further explore the properties of these graphs, including Hamiltonian and Eulerian paths, and propose a categorization of knitting techniques based on their graph complexity.

\begin{figure}[!htb]
    \centering
        \includegraphics[width=4.6cm]{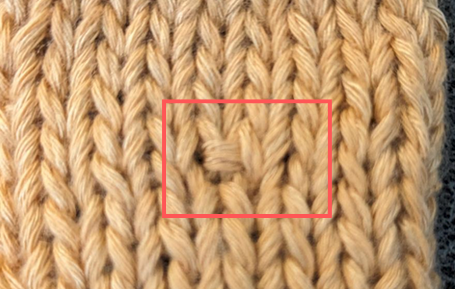}
\includegraphics[width=4.6cm]{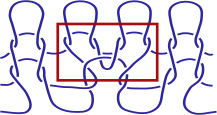}\resizebox{4.6cm}{!}{         
\begin{tikzpicture}

\node[circle, minimum size=19pt, fill=black!25, inner sep=0pt] (a11) at (0.0, 0.0) {$1$};
\node[circle, minimum size=19pt, fill=black!25, inner sep=0pt] (a21) at (1, 0.0) {$2$};
\node[circle, minimum size=19pt, fill=black!25, inner sep=0pt] (a31) at (2, 0.0) {$3$};

\node[circle, minimum size=19pt, fill=black!25, inner sep=0pt] (a12) at (-0.6, 1.0) {$7$};
\node[circle, minimum size=19pt, fill=red!50, inner sep=0pt] (a22) at (0.6, 1.0) {$6$};
\node[circle, minimum size=19pt, fill=red!50, inner sep=0pt] (a32) at (1.6, 1.0) {$5$};
\node[circle, minimum size=19pt, fill=black!25, inner sep=0pt] (a42) at (2.6, 1.0) {$4$};

\node[circle, minimum size=19pt, fill=black!25, inner sep=0pt] (a13) at (-0.6, 2.0) {$8$};
\node[circle, minimum size=19pt, fill=black!25, inner sep=0pt] (a23) at (0.6, 2.0) {$9$};
\node[circle, minimum size=19pt, fill=black!25, inner sep=0pt] (a33) at (1.6, 2.0) {$10$};
\node[circle, minimum size=19pt, fill=black!25, inner sep=0pt] (a43) at (2.6, 2.0) {$11$};

\draw[line width=0.02cm] (a11) -- (a21) -- (a31) -- (a42) -- (a43) -- (a33) -- (a32);
\draw[line width=0.02cm] (a33) -- (a23) -- (a22);
\draw[line width=0.02cm] (a23) -- (a13) -- (a12) -- (a22);
\draw[line width=0.02cm] (a12) -- (a11);
\draw[line width=0.03cm, color=red!75] (a42) -- (a32) -- (a22) -- (a21) -- (a32);    
\end{tikzpicture}}

    \caption{A common "Knit Front and Back" (kfb) stitch operation which increases the number of stitches in a row by 1 is shown in knitting (left), following the path of yarn (middle) and the resulting graph (right).}

    \label{fig:wooly_graph}
\end{figure}

Through this exploration, we hope to provide a framework for understanding and modeling knitting patterns using graph theory, contributing to both the mathematical and artistic communities. In section~\ref{sec:related_work}, we discuss other crafts that have been modeled by graphs, as well as some previous work on knitting. In section~\ref{background}, we define knitting and knitting graphs as well as define some properties~\ref{properties} and define categories of knitting~\ref{categorizing}. In section~\ref{identifying}, we discuss determining whether a graph is knittable and the uniqueness of a given graph \ref{edge_cases}.  We give a discussion of the limitations of this approach in section~\ref{limitations} and conclude in~\ref{conclusion}.
% \begin{figure}[!htb]
%     \centering
%     \includegraphics[width=.35\textwidth]{Figures/Pino.jpg}
%     \caption{Graph theory meets knitting - a nested approach. Knitting example from Bertinoro '24 (Thanks, Pino).}
%     \label{fig:jumper-pino}
% \end{figure}

% \subsection{Motivation}
\subsection{Related Work}\label{sec:related_work}

Blackwork Embroidery is an embroidery technique that comes from England.  It can be modeled as a graph \cite{Holden_embroidery_graph_2007}, and the paper goes over different properties of Blackwork Embroidery graphs, including defining how the graph must be Eulerian and has parity requirements on the edges.

Bobbin lace making has also been studied as it relates to graph theory \cite{Biedl_bobbin_2018}. The paper discusses how to identify a graph that could be a bobbin lace pattern and generate the bobbin lace from it.

Hitomezashi, an embroidery technique from Japan, can be encoded using two characters \cite{Seaton_hitomezashi_2022}.  This paper also discusses the duality property of Hitomezashi; a pattern on one side of the fabric corresponds to the reverse pattern on the other.  The authors note that this is also the case in knitting patterns. This is explored with a problem on reversibility in ``About Knitting" \cite{belcastro_knitting_2006}.  

Knitting itself has been studied by discussing knitting patterns in terms of recursive knitting patterns \cite{Bernasconi_recursive_knitting_2008} and how one could create a grammar around knitting. This led to a definition of the Kolmogorov complexity for different knitting patterns.  This model was based only on knit and purl stitches, looking at textures in knitting fabric.

 Counts models knitting with directed graphs~\cite{Counts_directed_2018} with a focus on visualization using a customized heuristic approach, as well as known graph embedding methods such as multidimensional scaling~\cite{coxon1972multidimensional, torgerson1952multidimensional} and the Kamada-Kawaii algorithm~\cite{kamada1989algorithm}. This approach has been further developed to better preserve edge lengths and match considerations of knitting geometry~\cite{gray2024graphmodellayoutalgorithm}

Another way to model knitting is by looking at the individual building blocks in terms of how the yarn is positioned \cite{markande2020}. This paper looks at the topological construction of knitting and breaks it down to the smallest repeatable section. 
While not about knitting, the book ``Figuring Fibers'' contains a chapter~\cite{calderhead_2018} that discusses a method to understand intermeshed crochet using graph theory. Here, we see the graphs used to understand how the two meshes fit together. A planar graph that can be crochet will give its dual as locations for the intermeshed portion. 

A generative method for creating knitting patterns, KnitGIST\cite{Hofmann_knitgist_2020}, takes different design considerations and stitch patterns to create hand-knitting patterns. This work uses a knitting graph but focuses on creating knitting patterns to fit a given stitch design and shape.

Many works look at machine knitting since it is usually easier to make things at scale with this technique. 
One paper that starts to bridge the gap \cite{Hofmann_knitpick_2019} begins with hand knitting and converts to machine knitting. This paper also creates methods to combine stitch techniques while keeping the general shape and stretch of the fabric.

The modeling of knitting using finite automata has also been studied~\cite{grishanov_1997}. This focuses on machine knitting and the state of the needles and yarn in each step. 

Machine knitting has also been modeled using meshes~\cite{narayanan2018}. Several rules for their graphs we allow in our framework, such as no crossings. Since we are focused on hand-knitting, we have different rules for what consists of feasible knitting. 

Another paper generates machine knitting patterns into arbitrary shapes using the mesh technique~\cite{Wu_stitchmeshes_2019}.  This paper~\cite{Yuksel_yarnlevel_2012}  creates a yarn-level visualization.  First, it generates a mesh that can correspond to different yarn positions. Then, it relaxes these positions so that the yarn is in a more realistic shape. This technique works for a variety of different stitches.
% We include a more detailed description of their rules and how we modify them in the appendix.

There is a paper looking at visually modeling knitting using edge length forces and normal forces to look at different knitting structures \cite{mckinlay_visualization_2023}. This paper begins with a stitch dictionary, defining how a mesh would be created from this. They also give examples of three-dimensional structures. 

% \subsection{Our Contributions}

\section{Background and Preliminaries}
% file for all preliminary definitions (knitting, knitting graphs, etc...)

In order to discuss knitting structures in detail, first we must carefully defining knit objects and second, we must relate these objects with graph models. Knitting objects have significant complexity depending on their context~\cite{kainen2023graph, Holden_embroidery_graph_2007, kaldor_2008}. In this study we will mostly consider the simplest models necessary to capture structure of knitted objects. From a minimal representation which treats each stitch as an infinitesimal vertex and its connections with other stitches as abstract edges in a graph, we will add complexity where necessary to examine knitting in other contexts, e.g. as knots. We will however be omitting the level of detail both in diameter of yarn and elastic properties necessary to analyze elastic properties of knit fabric.  

\subsection{Basic Definitions}\label{background}

We will begin by describing knit objects with mathematical precision:

\begin{definition}[$1$-Knit Object]
    A \emph{$1$-knit object} $O$ is an arrangement of interlocking loops in a single piece of yarn such that connecting the two ends of the piece of yarn creates the un-knot. Furthermore at least one loop must be secured only by another loop passing through it.
\end{definition}

This definition gives us the structure of knitting, loops or stitches, as seen in Fig:~\ref{fig:basic-knitting-single-stitch}. This figure shows a single row, in red, to show the path of the yarn.  In addition, we ensure that the knit object is not trivial by including that we must have one loop secured only by another loop. 

Now we will extend this definition by allowing more than one yarn. In knitting, techniques using more than one yarn are common, including fair isle, stripes, double knitting, intarsia, and other colorwork.

\begin{definition}[$k$-Knit Object]
    A \emph{$k$-knit object} $O$ is a collection of $k$ yarns arranged into loops which interlock and for which each independent yarn, joined end-to-end forms the un-knot. Furthermore at least one loop must be secured only by another loop passing through it. 
\end{definition}

Now we will explore a knit object that is finished. In knitting, a finished object refers to an object that is closed so that it does not unravel. We will show that this is a non-trivial.

\begin{definition}[$1$-Knit Finished Object]
    A $1$-Knit Finished Object is a 1-knit object which has been \emph{finished} such that when the loose ends are connected it forms a non-trivial knot.
\end{definition}

    Generally, this is made by passing the ends of yarn from a knit object a second time through their respective terminating stitches such that connecting these ends forms an irreducible knot with O(n) crossings where $n$ is the number of stitches. An example is shown in figure~\ref{fig:knot}. In this figure, we show a 5-coloring of the knitted object, showing that it is not the unknot.  In general, the shape of a knit stitch allows us to color both legs in one color and the piece behind with another. Then, we alternate these colors as we go up the knitting. More colors are necessary at the beginning and end of the knitting.

    \begin{figure}[ht]
    \centering
    \includegraphics[width=0.4\textwidth]{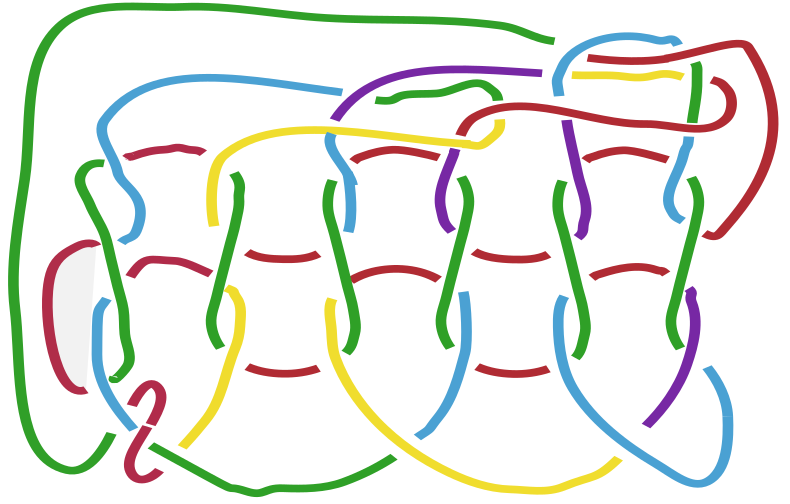}
    \caption{An example of a nontrivial knot formed by looping the loose ends of yarn again through their respective stitches and then connecting them. We color each stitch to show 5-colorability.}
    \label{fig:knot}
\end{figure}

\begin{definition}[$k$-Knit Finished Object]
    A $k$-Knit Finished Object is a knit object which has been \emph{finished} such that the loose ends of each piece of yarn can be connected to form (a) exactly $k$ complex Brunnian Links or (b) $n$ knots where $0 < n \le k$ and $k-n$ unknots such that none of the unknots are free.
\end{definition}

In case (a), we can construct a knit object such that the last unknot goes through each stitch on the last row. The more general case (b), at least one of the strands must not be the unknot. However many of these there are must hold the rest.

% As above, this can be done by passing the ends of each piece of yarn a second time through their respective terminating stitches such that connecting these forms a complex Brunnian link with $O(n_k)$ crossings where $n_k$ is the number of stitches formed with yarn $k$. 

From here, we can start to define a knitting graph. We begin by defining what a vertex in relation to a knit stitch is.

\begin{definition}[Knitting Vertices]
    A \emph{knitting vertex} $v$ corresponds with one loop of yarn in a finished object which is either closed or held by another stitch passing through it. 
\end{definition}

A knitting vertex can be thought of as a stitch in knitting. A closed stitch occurs when one takes the tail end of the yarn and passes it directly through the stitch.  This keeps the loop from unraveling. Otherwise, the stitch will be held by another stitch.

Now we can define a knitting graph made up of these vertices.

\begin{figure}[ht]
\centering
\includegraphics[width=0.4\textwidth]{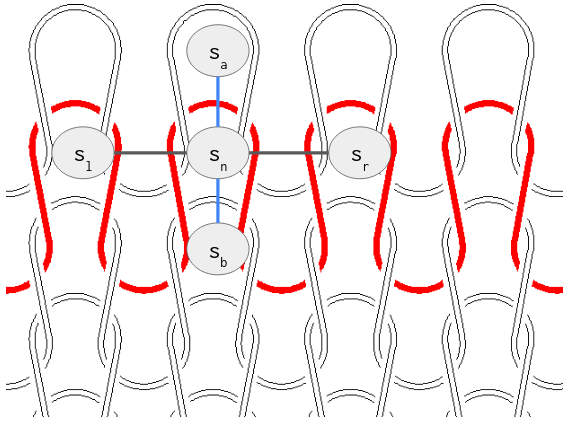}
\caption{An example of how a single node connects to the rest of the knitting graph. Here $S_n$ notes a single stitch with its connections.  $S_l$ and $S_r$ connect along the Hamiltonian path, which can also be seen in the red strand. $S_a$ and $S_b$ connect through a loop connection.}
\label{fig:basic-knitting-single-stitch}
\end{figure}

\begin{definition}[Knitting Graph]
    A \emph{knitting graph} $G = (V, E)$ for a knit object $O$ is a simple graph consisting of all knitting vertices $V$ and edges $E$ connecting each vertex to any other vertex that: (1) is sequential along the yarn with this stitch or (2) is passed through by this stitch. 
\end{definition}
% \todo{Diana: Is there a direction missing? I think we had a slightly different definition, mainly, we allowed the existence of multi-edges if they were differently colored. I am not sure if everything we did carries over if these two edges are combined into a purple edge.}
% \todo{Brian: Please change Definition 8 to what you prefer. Is the role of the multiple blue edge to specify the hamiltonian path? If so, I think the purple edges will still resolve this which leaves a nice simple graph.}

We give an example of a single stitch with its graph connections in Fig:~\ref{fig:basic-knitting-single-stitch}. In that example, the two nodes to the right and left fall under (1), the stitches are sequential along the yarn.  We can see this by looking at the red thread. The nodes above and below fall under (2), each of these stitches is connected due to a stitch that is passed though. In the case of the node below, the current node is passed through and in the case of the node above, the above stitch is passed through the current node. In this way, we can create a knitting graph from a knit object by creating the nodes and defining edges based on the yarn connections. The sequential nature of stitches and the dependency of each stitch on another also induces an ordering which we will us later.

In addition, we can trace the path of the yarn through the knitting which shows one strand of yarn along the sequential connections and two strands of yarn for each of the vertical connections (one up, one down). We define a yarn graph to model these connections.

\begin{figure}[ht]
    \centering
    \includegraphics[width=0.4\textwidth]{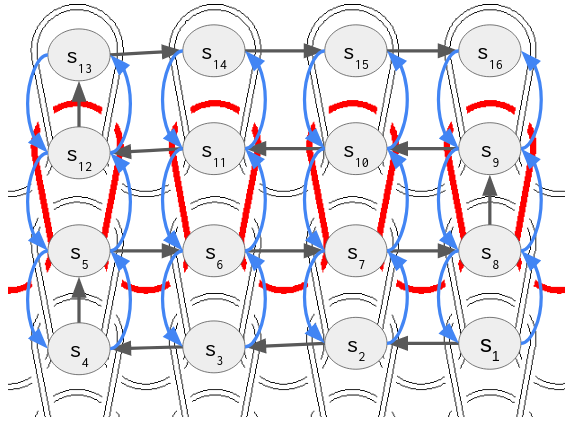}
    \caption{An example yarn graph for simple stockinette knitting. Here, the edges that are on the Hamiltonian are shown in black and the ones only on the Eulerian (yarn) path are shown in blue}
    \label{fig:eulerian_example}
\end{figure}

\begin{definition}[Yarn Graph]
The {yarn graph} $Y = (V, E)$ for a knit object $O$ is a  directed multi-graph consisting of all knitting vertices $V$ for the object and edges with edges $E$ for the path of the yarn as it passes through stitches.
\end{definition}

\begin{note} The knitting graph is an un-directed subset of the yarn graph.\end{note}

We can see an example of the yarn graph in Fig:~\ref{fig:eulerian_example}. In this example, we see that for each of the connections of type (2) in our knitting graph definition, we have an edge to and from that node. These edges represent the loop, yarn going to and from a given stitch.

Finally, we can compress the information from the yarn graph into a simple directed graph by color-coding edges corresponding with whether they are of type (1), i.e. single edges in the yarn graph, type (2), i.e. double edges in the yarn graph, or a mix of both (e.g. the turn stitches on the edges). We will use \textcolor{blue}{blue} for type (1), \textcolor{red}{red} for type (2), and \textcolor{purple}{purple} for mixed edges.  This is shown in figure~\ref{fig:color_graph}. Formally: 

\begin{definition}[Directed Knitting Graph]
The {directed knitting graph} $B = (V, E, C)$ for a knit object $O$ is a directed simple graph consisting of all knitting vertices $V$ for the object and edges with directed edges $E$ colors (red, blue, or purple) $C$. This graph reduces to the knitting graph for the same object by removing directions and colors. 
\end{definition}

\begin{figure}[ht]
    \centering
    \includegraphics[width=0.3\textwidth]{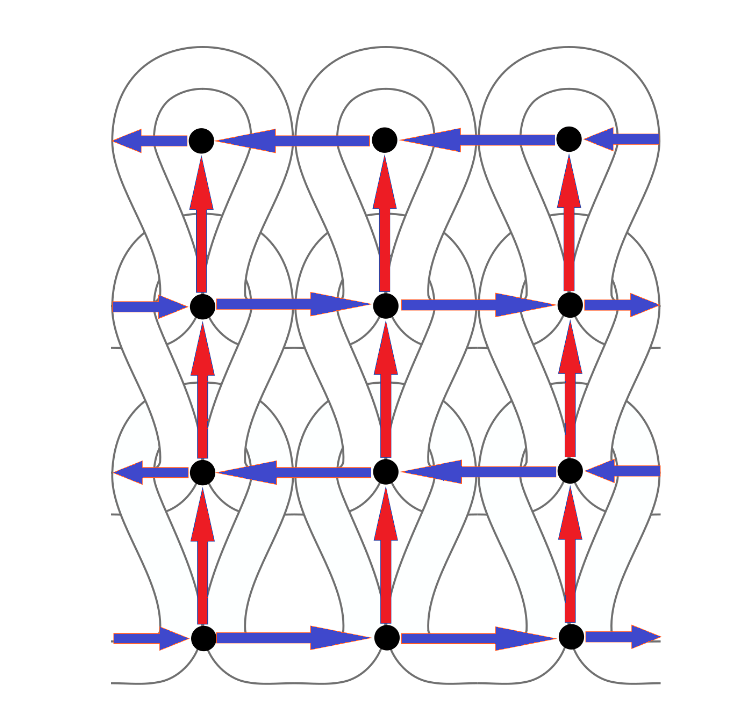}
    \caption{Example graph modeling from~\cite{Counts_directed_2018} (colors adjusted) The blue edges represent the yarn, the red edges represent interlocking (which is actually creating the `real' loops). Note purple edges are not visible in this section since they only appear on the exterior where the rows increment. }
    \label{fig:color_graph}
\end{figure}

\subsection{Properties of Knitting Graphs}\label{properties}

\begin{theorem}
    Every 1-knit object $O$ has a Hamiltonian path on its knitting graph $G$ and its yarn-graph $Y$. 
\end{theorem}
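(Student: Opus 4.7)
The plan is to exploit the fact that a $1$-knit object is made from a single piece of yarn, which induces a natural linear order on the stitches (equivalently, the knitting vertices). Concretely, I would label the loops $v_1, v_2, \ldots, v_n$ in the order in which the yarn forms them as one traces the strand from one loose end to the other. Because each knitting vertex is, by definition, a single loop of yarn, and each such loop is formed exactly once along the single strand, this ordering is a well-defined bijection between $\{1,\ldots,n\}$ and $V$.

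Next I would argue that each consecutive pair $(v_i, v_{i+1})$ is adjacent in the knitting graph $G$. This follows immediately from clause (1) of the \emph{Knitting Graph} definition: two vertices are connected whenever they are ``sequential along the yarn,'' which is precisely what the chronological ordering records. Thus $v_1 v_2 \cdots v_n$ is a Hamiltonian path in $G$. For the yarn graph $Y$, the note following the \emph{Yarn Graph} definition tells us that $G$ is an undirected subgraph of $Y$ (the sequential edges appear in $Y$ as single directed edges, while pass-through edges appear as doubled directed edges). Hence the same vertex sequence, with directions chosen along the yarn's travel, is a Hamiltonian path in $Y$ as well.

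The main obstacle I anticipate is making the claim ``each loop is formed exactly once along the yarn'' fully rigorous against the formal definitions. One must justify that the yarn-path through the object induces a total order on loops rather than, say, visiting a loop, leaving it, and later creating it; this needs the physical fact that a stitch is created at the moment the yarn passes through its predecessor and is secured only afterwards by subsequent passes. I would handle this by invoking the \emph{Knitting Vertex} definition (each vertex is a loop that is either closed or held by a later loop) together with the construction of a $1$-knit object from a single strand: the creation event of each loop is a unique point along the yarn's parameterization, and ordering these creation events gives the required enumeration. A secondary subtlety is what happens at the two endpoints and at row turns, where the ``sequential'' relation might be ambiguous; I would resolve this by appealing to the directedness of the yarn's traversal, noting that turns still preserve yarn-adjacency of successive stitches, so the enumeration and its corresponding edges remain valid.
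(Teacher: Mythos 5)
Your proof is correct and takes essentially the same approach as the paper: both arguments extract the Hamiltonian path from the linear order that the single strand of yarn imposes on the stitches, using clause (1) of the knitting-graph definition to make consecutive stitches adjacent (the paper phrases the ordering via the unraveling sequence, which is simply your yarn-formation order traversed in reverse). Your extra care about why each loop occurs exactly once along the strand is a reasonable refinement but not a different route.
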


\begin{proof}Let $O$ be a 1-knit object with knitting graph $G$ and yarn graph $Y$. Since $O$ must unravel and become the un-knot when the ends are tied together, we will start at one end and follow the unraveling sequentially through each stitch. This is the Hamiltonian path on $G$. Following the same path, since the unraveling is sequential and exactly goes in the opposite of the yarn direction, this also induces a Hamiltonian path through $Y$ in reverse of the unraveling order. \end{proof}

\begin{note} The gap between the knitting needles passes through each stitch exactly once and thus follows the hamiltonian path induced by knitting an object. \end{note}

\begin{lemma} Every 1-knit object $O$ has an eulerian path on its yarn grpah $Y$
\end{lemma}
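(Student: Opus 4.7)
The plan is to exhibit an Eulerian trail directly by appealing to the physical construction of the yarn graph $Y$, rather than verifying the standard degree conditions. Since $O$ is a $1$-knit object, it consists of a single piece of yarn with two free ends (before the ends are identified into the un-knot). By definition, the edges of $Y$ are precisely the segments of yarn that connect consecutive stitches as the yarn threads through the object.

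First I would fix one free end of the yarn, call it $s$, and label the stitch it enters as the starting vertex. Then I would follow the yarn continuously through the object until reaching the other free end, $t$, and argue that this physical traversal induces a walk in $Y$. Each time the yarn leaves one stitch and enters the next, it uses exactly one directed edge of $Y$, in agreement with the orientation with which that edge was defined.

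Next I would verify the two properties required of an Eulerian trail. For \emph{every edge used}: since $Y$'s edges are exactly the yarn segments between consecutive stitch visits, every edge of $Y$ appears in the traversal. For \emph{no edge used twice}: a single continuous piece of yarn cannot occupy the same segment between two consecutive visits to two stitches more than once, so every edge is used at most once. Together these give an Eulerian trail from $s$ to $t$.

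The only step that requires any care is making sure the notion of ``following the yarn'' genuinely corresponds to a walk in the directed multigraph $Y$ as defined, rather than an abstract topological path that might ignore multiplicities; this amounts to observing that a double edge in $Y$ arises exactly when the yarn enters and then leaves a neighboring stitch (the type (2) loop connection in the knitting graph), which the sequential traversal of the yarn respects automatically. No additional connectivity or parity argument is needed, since the Eulerian trail is produced constructively by the yarn itself.
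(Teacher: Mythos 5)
Your argument is correct and is essentially the same as the paper's: both follow the single piece of yarn from end to end and observe that each section of yarn corresponds to exactly one edge of $Y$, so the induced walk uses every edge exactly once. You simply spell out the two Eulerian-trail conditions more explicitly than the paper does.
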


\begin{proof}

The the yarn induces a path through the yarn graph of the knit object, and since there is exactly one piece of yarn and every section of yarn has exactly one edge in the yarn graph, this path is Eulerian. \end{proof}

\begin{definition}
    A simple graph $H$ is $k$-knittable if there exists a $k$-knit object $O$ whose knitting graph $G = H$.
\end{definition}

\begin{theorem}
    Every graph with a Hamiltonian path is 1-knittable. 
\end{theorem}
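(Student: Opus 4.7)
The plan is to take the given Hamiltonian path as the order in which the yarn traces through the stitches, and to realize each remaining edge of $H$ as a ``passes through'' relationship between the corresponding two stitches. Concretely, I would label the vertices as $v_1, v_2, \ldots, v_n$ along the Hamiltonian path, identify this sequence with the order in which the yarn visits the knitting vertices, and then use the non-path edges of $H$ to prescribe the loop-through (type (2)) connections.

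The construction proceeds inductively along the Hamiltonian path. I would start with $v_1$ as an initial open loop at one end of the yarn. Having already formed $v_1, \ldots, v_{i-1}$, I form $v_i$ by first computing the set of back-neighbors $B_i = \{v_j : j < i,\ \{v_i,v_j\} \in E(H)\} \setminus \{v_{i-1}\}$, and then creating $v_i$ as a generalized stitch in which a new loop of yarn is threaded through each stitch in $B_i$ (in any chosen order) before being left open as the next active loop. After processing $v_n$ the remaining end of the yarn is simply left loose. By construction, the sequential (type (1)) edge $\{v_{i-1}, v_i\}$ is realized by the continuous yarn between the two stitches, and each edge in $B_i$ is realized as a type (2) loop-through; hence the knitting graph of the resulting object is exactly $H$.

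It then remains to check that the object really is a $1$-knit object in the sense of the earlier definition. Because every stitch is constructed by threading yarn through existing loops without tying an actual knot, joining the two loose ends and pulling gives a closed curve that can be continuously deformed — by pulling each loop back out in reverse construction order — into a simple circle, hence the un-knot. Provided $n \geq 2$, at least one stitch (for instance $v_1$) is secured only by later stitches passing through it, fulfilling the second clause of the definition.

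The main obstacle I anticipate is defending the physical realizability of the ``generalized stitch'' that threads a single new loop through an arbitrary subset of previous stitches, which may be spatially far apart in the current object. My argument sidesteps this difficulty by observing that the definition of a $1$-knit object places no geometric restriction on layout or stitch shape and no bound on how far the yarn may travel between stitches; it constrains only the yarn topology (the un-knot condition) and the existence of at least one internally held loop. Under this purely topological reading, the inductive construction is always realizable, even though it need not correspond to any standard hand-knitting technique.
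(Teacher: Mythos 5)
Your construction is essentially the paper's own argument: walk the Hamiltonian path in order, let the continuous yarn realize the path edges, and thread each new stitch through its earlier non-path neighbours, with the Hamiltonian ordering guaranteeing that every such neighbour already exists (the paper's ``the hamiltonian ordering prevents collisions''). You in fact supply more detail than the paper's terse sketch, including the un-knot check; the one caveat is that your claim that $v_1$ is secured by a later stitch passing through it silently assumes some non-path edge exists, which fails in the degenerate case where $H$ is itself just a path --- an edge case the paper's proof also ignores.
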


\begin{proof} Let $G = (V,E)$ be a planar graph with a Hamiltonian path $P_G$... (proof by following the hamiltonian path and putting this stitch through any previous stitches to which it is connected. The hamiltonian ordering prevents collisions)\end{proof}

\begin{lemma}
    Every graph with a $k$-path cover k-knittable. 
\end{lemma}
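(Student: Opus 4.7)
The plan is to lift the preceding theorem's construction from one yarn to $k$ yarns, one per path in the cover. Let $G=(V,E)$ be a graph admitting a $k$-path cover $\{P_1,\dots,P_k\}$, i.e. a partition of $V$ into $k$ vertex-disjoint paths. Each $P_i$ has a natural linear ordering of its vertices; fix this ordering for every $i$, and then fix a total order on $V$ by concatenating: all vertices of $P_1$ in path order, then all of $P_2$, and so on up to $P_k$.

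Next I would knit the object stitch-by-stitch following this total order, using a distinct yarn $y_i$ to work the vertices of $P_i$. When vertex $v\in P_i$ is reached, I create a new stitch on yarn $y_i$ connected sequentially to the previous stitch of $P_i$ (or, if $v$ is the first vertex of $P_i$, casting on a new starting stitch for $y_i$). For every neighbor $u$ of $v$ in $G$ that has already been processed and is not the $P_i$-predecessor of $v$, I draw the new stitch through $u$'s loop, realizing $\{u,v\}$ as a type (2) loop edge in the knitting graph. Path edges within each $P_i$ then correspond exactly to the type (1) sequential-yarn edges for yarn $y_i$, and all remaining edges of $G$ are realized as type (2) edges by the pass-through step, so the knitting graph of the resulting object is precisely $G$.

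I would then verify that the construction yields a genuine $k$-knit object. Each yarn $y_i$, considered in isolation, performs exactly the construction of the preceding theorem along $P_i$, so joining its two free ends produces the un-knot; hence the multi-yarn condition of the $k$-knit definition is satisfied. The requirement that at least one loop be secured only by another loop passing through it follows whenever $G$ has at least one non-path edge (the usual boundary-case trick of adding a dummy loop handles the purely-disjoint-paths case).

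The main obstacle, as in the $1$-knittable theorem whose proof is only sketched, is showing that the pass-through operations can actually be carried out without topological collision. The ordering argument carries over: each new stitch is created strictly after all the earlier stitches it must interlock with, and since the previously created structure is realized in the plane in the order dictated by the total ordering, there is always a feasible side from which the new stitch can be looped through each required predecessor. Formalizing this geometric realization rigorously (rather than relying on the informal ``ordering prevents collisions'' remark) is the step I expect to require the most care, and it is the only place the proof is not a direct componentwise application of the $1$-knittable result.
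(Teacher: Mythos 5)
Your proposal is correct and takes essentially the same route as the paper, which simply says to construct a knit object for the subgraph spanned by each path and then extend loops to connect the subgraphs; your version makes the per-yarn construction, the global ordering, and the realization of cross-path edges as pass-through loops explicit. You are also right that the only real gap --- justifying that the ordering genuinely prevents topological collisions --- is one the paper itself leaves at the level of an informal remark.
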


\begin{proof} Suppose $G$ has a $k$-path cover, then construct knit-objects for the subgraph spanned by each path. Extend loops to connect the subgraphs. \end{proof} 

\begin{lemma}

\label{lem:cover}
    A graph is k-knittable if and only if it has a $k$-path cover. 
\end{lemma}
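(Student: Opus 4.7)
The plan is to establish this as a biconditional by splitting into two directions, noting that the backward direction was already handled in the immediately preceding lemma. So the real work is the forward direction: if $G$ is $k$-knittable, then $G$ admits a $k$-path cover.

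For the forward direction I would let $O$ be a $k$-knit object whose knitting graph equals $G$, and let $y_1, \ldots, y_k$ be its $k$ strands of yarn. Each vertex of $G$ corresponds to a single loop of yarn, so each vertex belongs to exactly one strand $y_i$; this partitions $V(G)$ into sets $V_1, \ldots, V_k$. Restricting attention to strand $y_i$ alone gives a $1$-knit object $O_i$ whose knitting graph is the subgraph $G[V_i]$ induced on $V_i$ (augmented by whatever interlocking edges stay internal to $V_i$). Applying the Hamiltonian path theorem for $1$-knit objects to each $O_i$ yields a Hamiltonian path $P_i$ through $V_i$ whose consecutive vertices are joined by type (1) sequential-along-the-yarn edges, hence are edges of $G$. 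The paths $P_1, \ldots, P_k$ are vertex-disjoint and together cover $V(G)$, which is exactly a $k$-path cover.

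For the backward direction I would simply cite the previous lemma: if $G$ has a $k$-path cover, it is $k$-knittable by the construction given there (knit each path as a $1$-knit subobject and extend loops between the pieces as needed).

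The main obstacle I foresee is the technical checking in the forward direction that a single strand of a $k$-knit object really does inherit all the hypotheses of a $1$-knit object, in particular that its end-to-end joining forms the unknot and that at least one of its loops is held only by another loop of the \emph{same} strand. The unknot condition follows directly from the $k$-knit definition, which requires each independent yarn joined end-to-end to be the unknot. The second condition is more delicate, since in principle a strand's loops could all be held by loops of \emph{other} strands; however, the Hamiltonian path argument only requires the sequential-along-the-yarn ordering of the strand, not the ``at least one loop secured'' clause, so I would bypass this issue by observing that the proof of the first theorem uses only the unraveling order induced by the yarn itself. If that observation fails for the degenerate strand, one can still take the sequential order along $y_i$ as an a priori linear ordering of $V_i$ whose consecutive pairs are type (1) edges of $G$, which is all the $k$-path cover conclusion requires.
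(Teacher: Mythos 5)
Your proposal is correct and follows essentially the same route as the paper: the backward direction is delegated to the preceding lemma (construct a knit object from each path in the cover and extend loops to join them), and the forward direction extracts one path per strand of yarn from the sequential ordering of stitches, observing that the strands partition the vertices. Your extra care about whether a single strand inherits the full $1$-knit-object hypotheses, and your fallback to using only the sequential-along-the-yarn ordering, is a reasonable tightening of the paper's terser argument rather than a different approach.
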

\begin{proof} suppose it doesn't have an $l$-path cover for $l \leq k$, then it cannot be knit with $k$ pieces of yarn without introducing a new edge connecting at least 2 of the paths in the cover, thus it cannot be knit with $k$ yarns. Suppose a graph is $k$-knittable. Then each piece of yarn corresponds with a path and since every stitch corresponds with a vertex, this induces a $k$-path cover. \end{proof}

%Everything in here needs to be a property that all knitting graphs (up to each class of knitting) has -- conversely, if any graph doesn't have these properties, you can't knit something of that class corresponding with that object. 

% file for establishing graph knitting equivalence and basic theorems about knittability and properties

\subsection{Categorizing Knitting}\label{categorizing}

Many knitting stitches and techniques exist, each bringing its own unique qualities to the craft. Basic stitches like the knit and purl form the foundation of most knitting projects, while more advanced techniques like cabling create textured patterns resembling twisted ropes. Lace involves creating delicate, openwork designs through a combination of increases and decreases. Colorwork techniques, such as Fair Isle and intarsia, enable the incorporation of multiple colors and yarns. Each of these methods expands the creative possibilities in knitting, allowing crafters to produce everything from simple to complex garments. Many of these techniques have bearing on our ability to define the graphs corresponding to these objects. In this section, we discuss how these techniques correspond to different classes of graphs. A brief summary can be found in table~\ref{tab:knitting_complexity}. 

We define the simplest knitting patterns and graphs with our complexity class 0. These correspond to planar graphs with restrictions enumerated in~\ref{fig:degree-configs}. In knitting, this includes techniques such as knit and purl stitches, increases, decreases, and simple lace. This does not include stitches which would knit many stitches into many stitches (e.g. knit 3 together through the front and back loop). 

\begin{figure}[!htb]
    \centering
    \includegraphics[width=.42\textwidth]{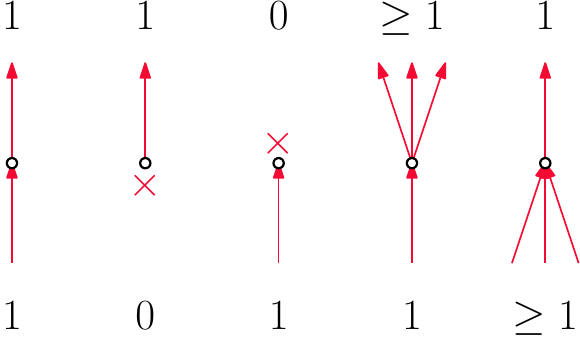}
    \caption{Allowed degree-distribution of the red edges.}
    \label{fig:degree-configs}
\end{figure}

Next, for class 1, we allow planar knitting without degree restrictions. This includes many-to-many stitches.

\begin{table*}[t]
    \centering
    \begin{tabular}{|c|c|c|c|}
         \hline Stitch  & In Fabric & Subgraph & Classification \\ \hline \hline
         \parbox[c]{\tabfig\textwidth}{Knit (k)} &  
         \parbox[c]{\tabfig\textwidth}{\includegraphics[width=\tabfig\textwidth]{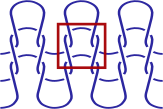}} & 
         \parbox[c]{\tabfig\textwidth}{
\resizebox{\tabfig\textwidth}{!}{         
\begin{tikzpicture}

\node[circle, minimum size=19pt, fill=black!25, inner sep=0pt] (a11) at (0.0, 0.0) {$1$};
\node[circle, minimum size=19pt, fill=black!25, inner sep=0pt] (a21) at (1.3, 0.0) {$2$};
\node[circle, minimum size=19pt, fill=black!25, inner sep=0pt] (a31) at (2.6, 0.0) {$3$};

\node[circle, minimum size=19pt, fill=black!25, inner sep=0pt] (a12) at (0.0, 1.0) {$6$};
\node[circle, minimum size=19pt, fill=red!50, inner sep=0pt] (a22) at (1.3, 1.0) {$5$};
\node[circle, minimum size=19pt, fill=black!25, inner sep=0pt] (a32) at (2.6, 1.0) {$4$};

\node[circle, minimum size=19pt, fill=black!25, inner sep=0pt] (a13) at (0.0, 2.0) {$7$};
\node[circle, minimum size=19pt, fill=black!25, inner sep=0pt] (a23) at (1.3, 2.0) {$8$};
\node[circle, minimum size=19pt, fill=black!25, inner sep=0pt] (a33) at (2.6, 2.0) {$9$};

\draw[line width=0.02cm] (a11) -- (a21) -- (a31) -- 
                         (a32) -- (a22) -- (a12) -- 
                         (a13) -- (a23) -- (a33) ;
\draw[line width=0.02cm] (a32) -- (a33) -- (a23) --
                         (a22) -- (a21) -- (a11) -- (a12);
\draw[line width=0.03cm, color=red!75] (a21) -- (a22) -- (a32);                         

\end{tikzpicture}}} & 
          \parbox[c]{\tabfig\textwidth}{Class 0} \\ \hline
         
%           \parbox[c]{\tabfig\textwidth}{Purl (p)} & 
%          \parbox[c]{\tabfig\textwidth}{\includegraphics[width=\tabfig\textwidth]{}} & 
%          \parbox[c]{\tabfig\textwidth}{
% \resizebox{\tabfig\textwidth}{!}{         
% \begin{tikzpicture}

% \node[circle, minimum size=19pt, fill=black!25, inner sep=0pt] (a11) at (0.0, 0.0) {$1$};
% \node[circle, minimum size=19pt, fill=black!25, inner sep=0pt] (a21) at (1.3, 0.0) {$2$};
% \node[circle, minimum size=19pt, fill=black!25, inner sep=0pt] (a31) at (2.6, 0.0) {$3$};

% \node[circle, minimum size=19pt, fill=black!25, inner sep=0pt] (a12) at (0.0, 1.0) {$6$};
% \node[circle, minimum size=19pt, fill=red!50, inner sep=0pt] (a22) at (1.3, 1.0) {$5$};
% \node[circle, minimum size=19pt, fill=black!25, inner sep=0pt] (a32) at (2.6, 1.0) {$4$};

% \node[circle, minimum size=19pt, fill=black!25, inner sep=0pt] (a13) at (0.0, 2.0) {$7$};
% \node[circle, minimum size=19pt, fill=black!25, inner sep=0pt] (a23) at (1.3, 2.0) {$8$};
% \node[circle, minimum size=19pt, fill=black!25, inner sep=0pt] (a33) at (2.6, 2.0) {$9$};

% \draw[line width=0.02cm] (a11) -- (a21) -- (a31) -- 
%                          (a32) -- (a22) -- (a12) -- 
%                          (a13) -- (a23) -- (a33) ;
% \draw[line width=0.02cm] (a32) -- (a33) -- (a23) --
%                          (a22) -- (a21) -- (a11) -- (a12);
% \draw[line width=0.03cm, color=red!75] (a21) -- (a22) -- (a32);                         

% \end{tikzpicture}}} & 
%           \parbox[c]{\tabfig\textwidth}{Class 0} \\ \hline
          
          \parbox[c]{\tabfig\textwidth}{Yarn over (yo)} & 
          \parbox[c]{\tabfig\textwidth}{\includegraphics[width=\tabfig\textwidth]{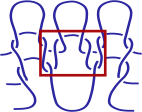}} & 
         \parbox[c]{\tabfig\textwidth}{
\resizebox{\tabfig\textwidth}{!}{         
\begin{tikzpicture}

\node[circle, minimum size=19pt, fill=black!25, inner sep=0pt] (a11) at (0.65, 0.0) {$1$};
\node[circle, minimum size=19pt, fill=black!25, inner sep=0pt] (a21) at (1.95, 0.0) {$2$};

\node[circle, minimum size=19pt, fill=black!25, inner sep=0pt] (a12) at (0.0, 1.0) {$5$};
\node[circle, minimum size=19pt, fill=red!50, inner sep=0pt] (a22) at (1.3, 1.0) {$4$};
\node[circle, minimum size=19pt, fill=black!25, inner sep=0pt] (a32) at (2.6, 1.0) {$3$};

\node[circle, minimum size=19pt, fill=black!25, inner sep=0pt] (a13) at (0.0, 2.0) {$6$};
\node[circle, minimum size=19pt, fill=black!25, inner sep=0pt] (a23) at (1.3, 2.0) {$7$};
\node[circle, minimum size=19pt, fill=black!25, inner sep=0pt] (a33) at (2.6, 2.0) {$8$};

\draw[line width=0.02cm] (a11) -- (a21) -- 
                         (a32) -- (a22) -- (a12) -- 
                         (a13) -- (a23) -- (a33) ;
\draw[line width=0.02cm] (a32) -- (a33) -- (a23) ;
\draw[line width=0.02cm] (a11) -- (a12);
\draw[line width=0.02cm] (a23) -- (a22);

\draw[line width=0.03cm, color=red!75] (a22) -- (a32);                         

\end{tikzpicture}}} & 
          \parbox[c]{\tabfig\textwidth}{Class 0} \\ \hline
          
          \parbox[c]{\tabfig\textwidth}{Knit Front and Back (kfb)} & 
          \parbox[c]{\tabfig\textwidth}{\includegraphics[width=\tabfig\textwidth]{stitches/kfb.png}} & 
          \parbox[c]{\tabfig\textwidth}{\resizebox{\tabfig\textwidth}{!}{         
\begin{tikzpicture}

\node[circle, minimum size=19pt, fill=black!25, inner sep=0pt] (a11) at (0.0, 0.0) {$1$};
\node[circle, minimum size=19pt, fill=black!25, inner sep=0pt] (a21) at (1, 0.0) {$2$};
\node[circle, minimum size=19pt, fill=black!25, inner sep=0pt] (a31) at (2, 0.0) {$3$};

\node[circle, minimum size=19pt, fill=black!25, inner sep=0pt] (a12) at (-0.6, 1.0) {$7$};
\node[circle, minimum size=19pt, fill=red!50, inner sep=0pt] (a22) at (0.6, 1.0) {$6$};
\node[circle, minimum size=19pt, fill=red!50, inner sep=0pt] (a32) at (1.6, 1.0) {$5$};
\node[circle, minimum size=19pt, fill=black!25, inner sep=0pt] (a42) at (2.6, 1.0) {$4$};

\node[circle, minimum size=19pt, fill=black!25, inner sep=0pt] (a13) at (-0.6, 2.0) {$8$};
\node[circle, minimum size=19pt, fill=black!25, inner sep=0pt] (a23) at (0.6, 2.0) {$9$};
\node[circle, minimum size=19pt, fill=black!25, inner sep=0pt] (a33) at (1.6, 2.0) {$10$};
\node[circle, minimum size=19pt, fill=black!25, inner sep=0pt] (a43) at (2.6, 2.0) {$11$};

\draw[line width=0.02cm] (a11) -- (a21) -- (a31) -- (a42) -- (a43) -- (a33) -- (a32);
\draw[line width=0.02cm] (a33) -- (a23) -- (a22);
\draw[line width=0.02cm] (a23) -- (a13) -- (a12) -- (a22);
\draw[line width=0.02cm] (a12) -- (a11);
\draw[line width=0.03cm, color=red!75] (a42) -- (a32) -- (a22) -- (a21) -- (a32);    
\end{tikzpicture}}} & 
          \parbox[c]{\tabfig\textwidth}{Class 0} \\ \hline
          
          \parbox[c]{\tabfig\textwidth}{Knit Two Together (k2tog)} & 
          \parbox[c]{\tabfig\textwidth}{\includegraphics[width=\tabfig\textwidth]{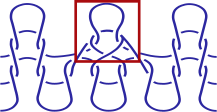}} & 
          \parbox[c]{\tabfig\textwidth}{\resizebox{\tabfig\textwidth}{!}{         
\begin{tikzpicture}

%\node[circle, minimum size=19pt, fill=black!25, inner sep=0pt] (a13) at (0.0, 2.0) {$8$};
%\node[circle, minimum size=19pt, fill=black!25, inner sep=0pt] (a23) at (1, 2.0) {$9$};
%\node[circle, minimum size=19pt, fill=black!25, inner sep=0pt] (a33) at (2, 2.0) {$10$};

\node[circle, minimum size=19pt, fill=black!25, inner sep=0pt] (a12) at (0.0, 1.0) {$9$};
\node[circle, minimum size=19pt, fill=red!50, inner sep=0pt] (a22) at (1, 1.0) {$10$};
\node[circle, minimum size=19pt, fill=black!25, inner sep=0pt] (a32) at (2, 1.0) {$11$};

\node[circle, minimum size=19pt, fill=black!25, inner sep=0pt] (a11) at (-0.6, 0.0) {$8$};
\node[circle, minimum size=19pt, fill=black!25, inner sep=0pt] (a21) at (0.6, 0.0) {$7$};
\node[circle, minimum size=19pt, fill=black!25, inner sep=0pt] (a31) at (1.6, 0.0) {$6$};
\node[circle, minimum size=19pt, fill=black!25, inner sep=0pt] (a41) at (2.6, 0.0) {$5$};

\node[circle, minimum size=19pt, fill=black!25, inner sep=0pt] (a10) at (-0.6, -1.0) {$1$};
\node[circle, minimum size=19pt, fill=black!25, inner sep=0pt] (a20) at (0.6, -1.0) {$2$};
\node[circle, minimum size=19pt, fill=black!25, inner sep=0pt] (a30) at (1.6, -1.0) {$3$};
\node[circle, minimum size=19pt, fill=black!25, inner sep=0pt] (a40) at (2.6, -1.0) {$4$};

\draw[line width=0.02cm] (a11) -- (a21) -- (a31) -- (a41) -- (a32);
%\draw[line width=0.02cm] (a33) -- (a23) -- (a22);
\draw[line width=0.02cm] (a12) -- (a22) -- (a32);
\draw[line width=0.02cm] (a12) -- (a11);
\draw[line width=0.03cm, color=red!75] (a12) -- (a22) -- (a21);  
\draw[line width=0.03cm, color=red!75] (a22) -- (a31);
\draw[line width=0.03cm] (a10) -- (a20) -- (a30) -- (a40);
\draw[line width=0.03cm] (a10) -- (a11);
\draw[line width=0.03cm] (a20) -- (a21);
\draw[line width=0.03cm] (a30) -- (a31);
\draw[line width=0.03cm] (a40) -- (a41);

\end{tikzpicture}}} & 
          \parbox[c]{\tabfig\textwidth}{Class 0} \\ \hline

          \parbox[c]{\tabfig\textwidth}{Cable One Behind (c1b)} & 
          \parbox[c]{\tabfig\textwidth}{\includegraphics[width=\tabfig\textwidth]{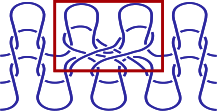}} & 
          \parbox[c]{\tabfig\textwidth}{\resizebox{\tabfig\textwidth}{!}{         
\begin{tikzpicture}

\node[circle, minimum size=19pt, fill=black!25, inner sep=0pt] (a11) at (0.0, 0.0) {$1$};
\node[circle, minimum size=19pt, fill=black!25, inner sep=0pt] (a21) at (1, 0.0) {$2$};
\node[circle, minimum size=19pt, fill=black!25, inner sep=0pt] (a31) at (2, 0.0) {$3$};
\node[circle, minimum size=19pt, fill=black!25, inner sep=0pt] (a41) at (3, 0.0) {$4$};

\node[circle, minimum size=19pt, fill=black!25, inner sep=0pt] (a12) at (0.0, 1.0) {$8$};
\node[circle, minimum size=19pt, fill=black!25, inner sep=0pt] (a22) at (1, 1.0) {$7$};
\node[circle, minimum size=19pt, fill=black!25, inner sep=0pt] (a32) at (2, 1.0) {$6$};
\node[circle, minimum size=19pt, fill=black!25, inner sep=0pt] (a42) at (3, 1.0) {$5$};

\node[circle, minimum size=19pt, fill=black!25, inner sep=0pt] (a13) at (0.0, 2.0) {$9$};
\node[circle, minimum size=19pt, fill=red!50, inner sep=0pt] (a23) at (1, 2.0) {$10$};
\node[circle, minimum size=19pt, fill=red!50, inner sep=0pt] (a33) at (2, 2.0) {$11$};
\node[circle, minimum size=19pt, fill=black!25, inner sep=0pt] (a43) at (3, 2.0) {$12$};

%\node[circle, minimum size=19pt, fill=black!25, inner sep=0pt] (a14) at (0.0, 3.0) {$16$};
%\node[circle, minimum size=19pt, fill=black!25, inner sep=0pt] (a24) at (1, 3.0) {$15$};
%\node[circle, minimum size=19pt, fill=black!25, inner sep=0pt] (a34) at (2, 3.0) {$14$};
%\node[circle, minimum size=19pt, fill=black!25, inner sep=0pt] (a44) at (3, 3.0) {$13$};

\draw[line width=0.02cm] (a12) -- (a11) -- (a21) -- (a31) -- (a41) -- (a42) -- (a32) -- (a31);
\draw[line width=0.02cm] (a42) -- (a43) -- (a33);
%\draw[line width=0.02cm] (a43) -- (a44) -- (a34) -- (a33);
%\draw[line width=0.02cm] (a34) -- (a24) -- (a23);
\draw[line width=0.02cm] (a13) --(a23);
\draw[line width=0.02cm] (a13) -- (a12) -- (a22);
\draw[line width=0.02cm] (a21) -- (a22) -- (a32);
\draw[line width=0.03cm, color=red!75] (a33) -- (a22);
\draw[line width=0.03cm, color=red!75] (a32) --(a23) -- (a33);

\end{tikzpicture}}} & 
          \parbox[c]{\tabfig\textwidth}{Class 2} \\ \hline

    \end{tabular}\medskip
    \caption{Overview of output from all algorithms on general knitting patterns.  The first column defines the knitting stitch we are looking at. The second column shows the actual knitted version.  The boxed area is the stitch and outside is plain knitting. The subgraph column shows the graph of the stitch with some plain stitches around it. We color the stitches of interest and edges created in red to distinguish them. When knitting flat, rows alternate directions, we begin each diagram going left to right. In the last column, we define the lowest category that the stitch could be a part of. Note that the stitches may be used in fabric that is in a higher category, e.g. a knit stitch used in the same fabric as a star stitch, in which case the knitting fabric is defined as the highest category of any stitch.}
    \label{tab:examples}
\end{table*}

\begin{table*}[t]
    \centering
    \begin{tabular}{|c|c|c|c|}
         \hline Stitch & Abbreviation & Strands of Yarn & Classification \\ \hline
         Knit & K & 1 & Class 0 \\ \hline
         Purl & P & 1 & Class 0 \\ \hline
         Yarn Over & YO & 1 & Class 0 \\ \hline
         Knit Front and Back & KFB & 1 & Class 0 \\ \hline
         Knit Two Together & K2TOG & 1 & Class 0 \\ \hline
         Slip Slip Knit & SSK & 1 & Class 0 \\ \hline
         Make One & M1 & 1 & Class 0 \\ \hline
         Knit Three Together & K3TOG & 1 & Class 0 \\ \hline
         % Knit Through Back Loop & KTBL & 1 & Class 0 \\ \hline
         % Purl Through Back Loop & PTBL & 1 & Class 0 \\ \hline
         % Slip Slip Purl & SSP & 1 & Class 0 \\ \hline
         % Pass Slipped Stitch Over & PSSO & 1 & Class 0 \\ \hline
         Bobble & B(n) & 1 & Class 1\\ \hline
         Star Stitch & Star & 1 & Class 1 \\ \hline
         Knit One Below & K1B & 1 & Class 2 \\ \hline
         Knit One Slip One & K1SL1 & 1 & Class 2 \\ \hline
         % Slip One Knitwise & SL1K & 1 & Class 0 \\ \hline
         % Make One Left & M1L & 1 & Class 0 \\ \hline
         % Make One Right & M1R & 1 & Class 0 \\ \hline
         Cable Two Behind & C2B & 1 & Class 2 \\ \hline
         Double Knit & DK & 2 & Class 3 \\ \hline
         Fair Isle & FI & 2 & Class 2 \\ \hline
         Intarsia & INT & 2 & Class 2 \\ \hline
         Brioche Knit & BRK & 1 & Class 2 \\ \hline
         % Brioche Purl & BRP & 1 & Class 2 \\ \hline
         Honeycomb Brioche Purl & HBP & 2 & Class 3 \\ \hline
         % Lattice & LATT & 1 & Class 2 \\ \hline
    \end{tabular}
    \caption{List of knitting stitches and their abbreviations, required strands of yarn, and full names, strands of yarn needed, complexity classification.}
    \label{tab:stitch_abbreviations}
\end{table*}

The next complexity class, class 2, describes knit objects whose graph representation includes crossings and thus their edges must be labeled with an orientation (e.g. front/back). This class of knitting includes cables, color work, drop stitches, and many forms of brioche knitting. We can further refine this class by exploring the $k$-planarity of the underlying knitting graph. We define 1-planar knitting to be exactly all knit objects whose graphs are 1-planar. We can isolate two interesting subclasses based on the types of edges crossed: 1a-planar knitting requires no crossings in edges corresponding with stitches passing through other stitches (``vertical'' edges). 1b-planar knitting requires no crossings in edges corresponding with adjacency between stitches (``horizontal'' edges). Techniques corresponding to 1a-planar knitting include colorwork where yarn color changes every other stitch, drop stitches up to 1 row down, and simple brioche knitting. We note that brioche knitting -- which intentionally introduces extra crossings -- includes at least one technique which generate a maximally 1-planar graph when knitted in the round as shown in Fig.~\ref{fig:maximal}. This technique uses 4 yarns to produce extra crossings. 1b-planar knitting includes single cables (see c1b in Table~\ref{tab:examples} which cross at most 1 column. 

\begin{figure}
    \centering
\resizebox{0.35\textwidth}{!}{         
\begin{tikzpicture}

\node[circle, minimum size=19pt, fill=black!25, inner sep=0pt] (a11) at (0.0, 0.0) {$a$};
\node[circle, minimum size=19pt, fill=black!25, inner sep=0pt] (a21) at (1.3, 0.0) {$b$};
\node[circle, minimum size=19pt, fill=black!25, inner sep=0pt] (a31) at (2.6, 0.0) {$a$};
\node[circle, minimum size=19pt, fill=black!25, inner sep=0pt] (a41) at (3.9, 0.0) {$b$};
\node[circle, minimum size=19pt, fill=black!25, inner sep=0pt] (a51) at (5.2, 0.0) {$a$};
\node[circle, minimum size=19pt, fill=black!25, inner sep=0pt] (a61) at (6.5, 0.0) {$b$};

\node[circle, minimum size=19pt, fill=black!25, inner sep=0pt] (a12) at (0.0, 1.0) {$b$};
\node[circle, minimum size=19pt, fill=black!25, inner sep=0pt] (a22) at (1.3, 1.0) {$a$};
\node[circle, minimum size=19pt, fill=black!25, inner sep=0pt] (a32) at (2.6, 1.0) {$b$};
\node[circle, minimum size=19pt, fill=black!25, inner sep=0pt] (a42) at (3.9, 1.0) {$a$};
\node[circle, minimum size=19pt, fill=black!25, inner sep=0pt] (a52) at (5.2, 1.0) {$b$};
\node[circle, minimum size=19pt, fill=black!25, inner sep=0pt] (a62) at (6.5, 1.0) {$a$};

\node[circle, minimum size=19pt, fill=black!25, inner sep=0pt] (a13) at (0.0, 2.0) {$a$};
\node[circle, minimum size=19pt, fill=black!25, inner sep=0pt] (a23) at (1.3, 2.0) {$b$};
\node[circle, minimum size=19pt, fill=black!25, inner sep=0pt] (a33) at (2.6, 2.0) {$a$};
\node[circle, minimum size=19pt, fill=black!25, inner sep=0pt] (a43) at (3.9, 2.0) {$b$};
\node[circle, minimum size=19pt, fill=black!25, inner sep=0pt] (a53) at (5.2, 2.0) {$a$};
\node[circle, minimum size=19pt, fill=black!25, inner sep=0pt] (a63) at (6.5, 2.0) {$b$};

%\draw[line width=0.02cm] (a11) -- (a21) -- (a31) -- (a41) -- (a51) -- (a61) --
%                         (a62) -- (a52) -- (a42) -- (a32) -- (a22) -- (a12) --
%                         (a13) -- (a23) -- (a33) -- (a43) -- (a53) -- (a63);
\draw[dotted, line width=0.02cm] (a13) .. controls (3, 2.91) .. (a63); 
\draw[dotted, line width=0.02cm] (a12) .. controls (3, 1.9) .. (a62); 
\draw[dotted, line width=0.02cm] (a11) .. controls (3, 0.9) .. (a61);
\draw[line width=0.02cm, color=blue!75] (a13) -- (a23) -- (a33) --
                         (a43) -- (a53) -- (a63) ;
\draw[line width=0.02cm, color=blue!75] (a11) -- (a21) -- (a31) --
                         (a41) -- (a51) -- (a61) ;
\draw[line width=0.02cm, color=red!75] (a12) -- (a22) -- (a32) --
                         (a42) -- (a52) -- (a62) ;                         
\draw[line width=0.02cm, color=red!75]  (a12) -- (a13);
\draw[line width=0.02cm, color=blue!75] (a22) -- (a23); 
\draw[line width=0.02cm, color=red!75]  (a32) -- (a33);
\draw[line width=0.02cm, color=blue!75] (a42) -- (a43);
\draw[line width=0.02cm, color=red!75]  (a52) -- (a53);
\draw[line width=0.02cm, color=blue!75] (a62) -- (a63);

\draw[line width=0.02cm, color=blue!75] (a11) -- (a12);
\draw[line width=0.02cm, color=red!75]  (a21) -- (a22); 
\draw[line width=0.02cm, color=blue!75] (a31) -- (a32);
\draw[line width=0.02cm, color=red!75]  (a41) -- (a42);
\draw[line width=0.02cm, color=blue!75] (a51) -- (a52);
\draw[line width=0.02cm, color=red!75]  (a61) -- (a62);

\draw[line width=0.02cm, color=blue!75] (a11) -- (a22) -- (a31) --
                         (a42) -- (a51) -- (a62) ;                       
\draw[line width=0.02cm, , color=red!75] (a12) -- (a21) -- (a32) --
                         (a41) -- (a52) -- (a61) ; 
\draw[line width=0.02cm, color=blue!75] (a13) -- (a22) -- (a33) --
                         (a42) -- (a53) -- (a62) ;                       
\draw[line width=0.02cm, color=red!75] (a12) -- (a23) -- (a32) --
                         (a43) -- (a52) -- (a63) ;

\end{tikzpicture}}
    \caption{Brioche knitting pattern made by holding 4 strands of yarn in 2 colors. Note that exterior stitches on the sides are connected i.e. ``in the round.''}
    \label{fig:maximal}
\end{figure}
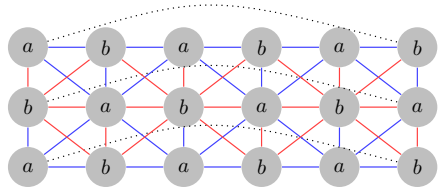

For $k > 2$, we note that the reach of each non-planar technique is increased. Each technique (colorwork, drop stitches, brioche techniques, and cables) has its reach increased by at least $k$ and as much as $2k$, meaning that drop stitches can drop at least $k$ rows down and possibly up to $2k$ rows, depending on the rest of the graph. There is a natural layout of each graph which corresponds to the plane in which it was knit. In this layout, the number of crossings introduced by each technique can be exactly determined by the parameters of the technique. For example in a cable 8 back (c8b), 8 stitches are removed from the needle and held to the back of the knitting while the subsequent 8 stitches are knit and then returned to the needle to be knit. This introduces 8 crossings in the natural knitting layout. However, a different layout may be found which introduces other crossings not naturally in the knitting in order to reduce the number of crossings corresponding with the cable in another layout. 

The complexity class 3 corresponds to knit objects whose vertices also require multiple orientations. This class includes double-knitting, where two layers of fabric (front and back) are knit simultaneously with two yarns allowing the yarns to be swapped arbitrarily between the front and back. This allows for very sophisticated color work and produces a thicker combined fabric. Other techniques,  including some forms of brioche, also fit this classification. Depending on the level of detail required, this class of knitting may require 3D layouts or impose stronger requirements on 2D layouts for visualization. 

We can get some insight into these classes by looking at a popular knitting and crochet website, Ravelry~\cite{Ravelry}. Ravelry has a search function where you can look through hundreds of thousands of patterns. Of these, about 240,000 are in class 0, 16,000 in class 1, 105,000 in class 2, and 6,000 in class 3. Some examples of stitches which delineate the differences between these classes are shown in~\ref{tab:stitch_abbreviations}. This quick look did not include every pattern on Ravelry since Ravelry's categories are optional for patterns and some categories on Ravelry could fall into multiple classes.

%A last consideration is the genus of graphs for each of our classes. Class 0 are planar graphs and require no holes. Class 1 objects are non-planar and thus require at least one hole, thus Class 1 graphs are of genus greater than 1. 1-planar knitting has genus exactly 1 (\cite{Schumacher1981}), however $k$-planar knitting may have genus less than $k$. It may be possible to measure genus by keeping count and orientation information from each non-planar knitting operation. 

\begin{table}[]
    \centering
    \begin{tabular}{|c|c|c|}
        \hline Complexity & Knit Description & Graph Description  \\\hline
        0 & Basic Simple Knitting & planar graphs with restrictions from ~\ref{fig:degree-configs}\\ \hline
        1 &  Common simple knitting & planar graphs \\ \hline
        2 & $k$-planar knitting & \parbox{3.4cm}{Edges have Two Orientations (cables, colorwork, brioche)} \\ \hline
        3 & Double Knitting & \parbox{3.4cm}{Edges and Nodes have\\ multiple orientations}  \\ \hline
        
    \end{tabular}
    \caption{Description of categories of knitting in relation to their graph description.}
    \label{tab:knitting_complexity}
\end{table}

\section{Identifying Knittable Graphs}\label{identifying}

% \begin{theorem}
%     Identifying 1-knittable Graphs is NP-hard.
% \end{theorem}

% \textbf{Proof:} Let $G = (E, V)$ be a graph. By Lemma~\ref{lem:cover}, $G$ is 1-knittable iff $G$ has a hamiltonian path. Determining whether $G$ has a hamiltonian path is np-hard ~\cite{hamiltonian_np_hard}. \qed

Now that we have labeled different types of knitting, we will focus on identifying which graphs are knittable in the various complexity classes that we have defined. We will start with the general graph case.

\begin{lemma}
\label{lem:identify-k-knittable}
Identifying general $k$-knittable graphs is NP-hard.
\end{lemma}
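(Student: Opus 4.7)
The plan is to reduce directly from the Hamiltonian path problem, leveraging the characterization already established in Lemma~\ref{lem:cover}. That lemma gives us the bridge between the combinatorial knitting condition and a purely graph-theoretic one: a simple graph $H$ is $k$-knittable if and only if $H$ admits a $k$-path cover (a partition of $V(H)$ into at most $k$ vertex-disjoint paths). So the plan is to phrase $k$-knittability entirely in terms of $k$-path covers and then exhibit a known NP-hard problem that sits inside this family.

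First I would make explicit the decision problem: given an instance $(H, k)$ where $H$ is a simple graph and $k \in \mathbb{N}$, decide whether $H$ is $k$-knittable. Applying Lemma~\ref{lem:cover}, this is equivalent to asking whether $V(H)$ can be covered by at most $k$ vertex-disjoint paths in $H$. Setting $k = 1$ restricts the question to: does $H$ have a 1-path cover? A 1-path cover of a graph is, by definition, a single path visiting every vertex, i.e.\ a Hamiltonian path.

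Next I would carry out the reduction. Given any instance $G$ of the Hamiltonian path problem, I would construct the $k$-knittability instance $(G, 1)$ in trivial (linear) time. By the previous paragraph, $(G,1)$ is a YES-instance of $k$-knittability iff $G$ has a Hamiltonian path. Since the Hamiltonian path problem is classically NP-complete on general graphs, this many-one reduction certifies NP-hardness of the general $k$-knittability problem. If the statement is read instead as ``for every fixed $k$, identification is NP-hard,'' the same reduction handles $k=1$ directly; for larger fixed $k$ one can pad the input graph with $k-1$ disjoint isolated vertices (or disjoint edges), which contributes exactly $k-1$ trivial additional paths to any cover, preserving the equivalence with Hamiltonicity of the original graph.

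There is no real obstacle here once Lemma~\ref{lem:cover} is in hand; the only subtlety to flag is the exact reading of ``general $k$-knittable'' (with $k$ part of the input versus $k$ fixed), and the padding remark above addresses both readings uniformly. I would conclude by noting that the reduction also shows the problem lies in NP (a purported $k$-path cover is a polynomial-size certificate verifiable in linear time), so in fact $k$-knittability is NP-complete, not merely NP-hard.
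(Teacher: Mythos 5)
Your proof is correct and takes essentially the same route as the paper: both apply Lemma~\ref{lem:cover} to translate $k$-knittability into the existence of a $k$-path cover and then invoke the NP-hardness of that problem. The only difference is that you make the underlying reduction from Hamiltonian path explicit (including the padding argument for fixed $k>1$), whereas the paper simply cites Karp's result on path covers.
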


\begin{proof} Let $G$ be a graph. By Lemma~\ref{lem:cover}, $G$ is k-knittable iff $G$ has a $k$-path cover. Determining whether $G$ has a $k$-path cover is NP-hard ~\cite{karp2010reducibility}. \end{proof}

\subsection{Identifying $k$-knittable DAGs.}

If we have more information on the graph and know that it is a DAG, we can identify a knittable graph easier. We start with the definition of a $k$-knittable DAG.

 \begin{definition}
    A directed graph $H$ is $k$-knittable if there exists a coloring $C$ and a $k$-knit object $O$ whose directed knitting graph $G = H \bigoplus C$.
\end{definition}

\begin{theorem}
    1-knittable DAGs can be identified in linear time. 
\end{theorem}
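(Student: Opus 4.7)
My plan is to reduce 1-knittability of a DAG to the existence of a directed Hamiltonian path, and then invoke the classical linear-time algorithm for detecting Hamiltonian paths in DAGs. The key insight is that in a DAG, the knitting order (which naturally induces a partial order on stitches) must agree with the DAG's edge orientations, so the yarn path is forced to be a directed Hamiltonian path in the DAG.

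\textbf{Step 1 (characterization).} First I would show that a DAG $H$ is 1-knittable if and only if $H$ admits a directed Hamiltonian path. For the forward direction, suppose $H$ is 1-knittable with some coloring $C$ and knit object $O$. By Lemma~\ref{lem:cover}, the underlying undirected graph of $H$ has a 1-path cover, i.e., a Hamiltonian path, traced by the yarn. Because $H$ is a DAG, this path must respect the partial order of $H$, so every edge along it is directed forward, giving a directed Hamiltonian path. For the converse, if $H$ has a directed Hamiltonian path $v_1 \to v_2 \to \cdots \to v_n$, then by the earlier theorem that every Hamiltonian graph is 1-knittable, we construct the knit object by knitting the stitches in the order $v_1, v_2, \ldots, v_n$, coloring the Hamiltonian edges blue, the remaining edges red, and any edge serving both roles purple.

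\textbf{Step 2 (algorithm).} Next I would invoke the standard fact that a DAG has a directed Hamiltonian path if and only if it has a unique topological ordering, and that this condition can be verified in linear time. The algorithm is: run a topological sort (e.g., Kahn's algorithm) in $O(|V|+|E|)$ time to produce an ordering $v_1, v_2, \ldots, v_n$; then scan the ordering and check that each consecutive pair $(v_i, v_{i+1})$ is joined by a directed edge. If all such edges are present, the ordering is itself a directed Hamiltonian path and the coloring of Step~1 certifies 1-knittability; otherwise no directed Hamiltonian path exists and $H$ is not 1-knittable.

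\textbf{Main obstacle.} The technical heart of the argument is the characterization in Step~1, specifically ensuring that the coloring obtained from a directed Hamiltonian path is actually realizable by a physical knit object. Concretely, for every non-path edge $(v_i, v_j)$ with $i < j$, the stitch at $v_j$ must be able to pass through the earlier stitch at $v_i$ consistently with all intermediate stitches. This should reduce to a careful invocation of the construction in the proof that every Hamiltonian graph is 1-knittable, since that construction processes stitches in Hamiltonian order and incorporates all prior connections naturally; however, one must verify that the DAG orientation does not impose any additional constraint that would block the construction, and that the blue/red/purple assignment is consistent with a valid sequence of knitting operations on a single strand of yarn.
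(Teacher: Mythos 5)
Your proposal is correct and follows essentially the same route as the paper: topologically sort the DAG in $O(|V|+|E|)$ time, check whether consecutive vertices in the ordering form a directed Hamiltonian path, and if so apply the earlier construction (every graph with a Hamiltonian path is 1-knittable, with non-path edges becoming loop connections). Your version is more carefully argued than the paper's sketch, in particular in making the iff-characterization via Lemma~\ref{lem:cover} explicit, but the underlying idea is identical.
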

We can find a Hamiltonian path, if it exists, in linear time.  By sorting the nodes topologically, we end up with an ordering for the Hamiltonian path.  As shown above, given a Hamiltonian path, we can construct a knitting pattern, by setting every edge not on the Hamiltonian to a loop connection. \qed

\begin{note}
    If there are restrictions on the edge connections, we can check this as we build the Hamiltonian by checking the in and out degrees.
\end{note}

Likewise, a $k$-knittable DAG follows similarly.

\begin{theorem}
    $k$-knittable DAGs can be identified in polynomial time.
\end{theorem}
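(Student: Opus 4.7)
The plan is to reduce the problem to computing a minimum path cover of a DAG, for which a classical polynomial-time algorithm exists via bipartite matching. By Lemma~\ref{lem:cover}, a graph $H$ is $k$-knittable if and only if it admits a $k$-path cover. Hence to decide whether a DAG $H$ is $k$-knittable it suffices to decide whether $H$ admits a vertex-disjoint path cover of size at most $k$.

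First I would invoke the standard reduction: from the DAG $H = (V,E)$ construct the bipartite graph $H' = (V^-\cup V^+, E')$ where $V^- = \{v^-: v\in V\}$, $V^+ = \{v^+ : v\in V\}$, and $(u^-,v^+)\in E'$ iff $(u,v)\in E$. A maximum matching $M$ in $H'$ corresponds to a minimum vertex-disjoint path cover of $H$ of size $|V|-|M|$; this is the classical Dilworth/König-style correspondence and can be computed in polynomial time (e.g.\ Hopcroft--Karp in $O(|E|\sqrt{|V|})$). Then one tests whether $|V|-|M|\le k$. If so, one reads off the path cover directly from the matching by chaining each matched pair $u\to v$.

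Second, I would convert the path cover into the required directed knitting graph with a coloring $C$. For each of the $j \le k$ paths, orient the edges along the path and color them blue (these correspond to the yarn-sequential edges of a single strand). Every remaining directed edge of $H$ is colored red, representing a loop connection between a stitch and one it passes through (purple is unnecessary here because the DAG hypothesis rules out turn-stitch ambiguity). The topological ordering of $H$, refined so that each path is an interval, ensures that for every red edge $(u,v)$ the stitch at $v$ is knit strictly after $u$, so the construction in the proof of Theorem~``every graph with a Hamiltonian path is $1$-knittable'' applies path-by-path, with extension loops used between paths as in Lemma~\ref{lem:cover}'s proof. Therefore the $k$-knit object exists whenever the minimum path cover has size at most $k$.

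The main obstacle I expect is not the running time but the argument that every minimum path cover of a DAG can be realized as a legitimate directed knitting graph, i.e.\ that the coloring $C$ produced above is always consistent with some physically realizable knit object. The subtle point is that the loop-connection edges (red) must be orientable so that every stitch passes through only stitches that appear earlier in its own yarn's path, which is exactly guaranteed by the topological order of the DAG; once this invariant is stated carefully, realizability follows from the same inductive construction used in the $1$-knittable case, applied independently along each of the $j$ paths and then joined. Combined with the polynomial-time matching step, this yields the claimed polynomial-time identification algorithm.
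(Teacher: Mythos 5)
Your proposal is correct and follows essentially the same route as the paper: both reduce the question, via Lemma~\ref{lem:cover}, to computing a minimum vertex-disjoint path cover of the DAG in polynomial time and comparing its size to $k$. The only differences are in the subroutine and level of detail --- you use the classical bipartite-matching reduction where the paper cites a minimum-flow formulation (and a parameterized-linear algorithm), and you additionally spell out how the cover is realized as a knit object, which the paper leaves implicit.
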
 

This is equivalent to finding a k-disjoint-path-cover, which has been studied in a variety of settings. It has been shown to be parameterized linear \cite{Caceres_mpcdag_2022}, although the general algorithm is polynomial through a reduction to minimum flow \cite{Ntafos_pathcover_1979}.

\begin{theorem}
    \label{thm:k-knittable-DAGs-no-many-many}
    %$k$-knittable DAGs without many-many vertices can be computed in linear time. 
    It can be decided in polynomial time, whether a DAG is $k$-knittable with no violet edges, and of complexity class 0.
\end{theorem}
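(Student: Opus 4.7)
The plan is to decide the property by verifying three conditions in polynomial time: planarity of the underlying undirected graph, existence of an edge 2-coloring (blue for yarn-path edges, red for loop edges) in which the blue edges form at most $k$ vertex-disjoint directed paths covering every vertex, and satisfaction of the red in/out-degree configurations permitted by class~$0$ (Figure~\ref{fig:degree-configs}) at every vertex. Because violet edges are excluded, the blue and red classes partition $E$ unambiguously, so we need not reason about edges playing both roles.

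First, I would run a linear-time planarity test on the underlying undirected graph and reject if it fails. Next, observe that class~$0$ caps the red in-degree and out-degree of every vertex by a constant, while the path-cover condition caps the blue in- and out-degree at each vertex by $1$. Hence every vertex of any feasible directed knitting graph has bounded total degree, and any input whose degree exceeds this bound can be rejected outright.

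For the remaining bounded-degree case, I would jointly encode conditions (ii) and (iii) as a constrained $k$-path cover problem and reduce it to minimum flow, extending the classical reduction of Ntafos and Hakimi~\cite{Ntafos_pathcover_1979} that underlies the preceding polynomial-time result on $k$-knittable DAGs. The standard construction splits each vertex $v$ into $v_{\mathrm{in}}$ and $v_{\mathrm{out}}$ and uses a unit-capacity internal edge to represent blue pass-through at $v$. I would augment this with a constant-sized gadget at $v$ that selects one admissible red configuration from Figure~\ref{fig:degree-configs} and enforces matching in- and out-capacities on the edges routed as red through $v$. The feasibility question then becomes: does the augmented network admit a flow whose blue part decomposes into at most $k$ source-to-sink paths covering every $v_{\mathrm{in}} \to v_{\mathrm{out}}$ arc?

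The main obstacle will be designing these per-vertex gadgets so that the finite, discrete catalog of allowed red configurations can be encoded by linear capacity constraints rather than by a combinatorial search over configurations. Because the catalog is finite and does not depend on the input, each gadget has constant size, so the auxiliary network has polynomial size; standard polynomial-time minimum-flow algorithms then yield a polynomial-time decision procedure. A secondary subtlety is that the planarity condition interacts with the color-assignment step only through vertex degrees, which we have already forced to be bounded, so planarity can be checked once up front rather than re-verified per coloring. Combining the planarity test, degree pruning, and the flow-based feasibility check yields the claimed polynomial-time algorithm.
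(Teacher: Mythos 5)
Your overall architecture---split each vertex $v$ into $v_{\mathrm{in}}$ and $v_{\mathrm{out}}$, force one unit of flow through the internal arc to encode membership in exactly one blue thread, and bound the flow value by $k$ via a super-source and super-sink---is the same reduction to a flow problem that the paper uses. But there is a genuine gap at exactly the step you flag as ``the main obstacle'': you assert that a constant-sized per-vertex gadget can ``select one admissible red configuration'' from Figure~\ref{fig:degree-configs} and enforce it through linear capacity constraints, but you never construct such a gadget, and it is not clear one exists in a plain flow network. A disjunctive constraint of the form ``exactly one of these in/out-degree patterns holds at $v$'' is not in general expressible by capacities on arcs without sacrificing the integrality and polynomial solvability that the whole reduction depends on. Since this is precisely where the class-0 degree restrictions enter the argument, the proof is incomplete without it.

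The paper closes this gap with an observation that makes the gadget unnecessary: because every vertex lies on exactly one blue thread, its blue in- and out-degrees are each $0$ or $1$ and are determined entirely by whether the vertex is a start ($S$), a middle ($M$), or an end ($T$) of a thread, so its red degrees are just $\mathrm{indeg}(v)$ and $\mathrm{outdeg}(v)$ minus those blue contributions. Which of the roles $S$, $M$, $T$ are admissible for $v$ can therefore be precomputed from its total in- and out-degree alone (Table~\ref{tab:degrees-overview}), with any vertex admitting no role causing immediate rejection. The flow network then only has to enforce the role assignment---attach the super-source only to $S$-vertices, the super-sink only to $T$-vertices, and detach the incoming (resp.\ outgoing) original arcs of vertices that cannot be middles---and the remaining choice among admissible roles is resolved by the flow itself rather than by a configuration-selection gadget. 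A secondary mismatch: you front-load a planarity test and a degree-pruning pass; these are harmless (and the planarity check is arguably faithful to the definition of class~0), but they are not part of the paper's argument and do not substitute for the missing construction above.
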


\begin{proof}
When deciding whether a DAG $G=(V,E)$ is $k$-knittable, the task is to color the edges blue or red in such a way the blue edges form at most~$k$ directed paths, and the red edges fulfill the degree restrictions for every vertex.
In order to test this we generate a flow network for~$G$ in which a valid flow of value $k$ will represent the~$k$ threads.
For every vertex~$v$, we can decide whether it can be the start of a thread ($S$), at the end of a thread ($T$), in the middle ($M$) or any combination of these simply by looking at the in- and outdegree of $v$.
We define $S\subseteq V$ to be the vertices $v\in V$, such that $\mathrm{indeg}(v)$ and $\mathrm{outdeg}(v)-1$ fulfill the red degree restrictions.
Next, we define $M \subseteq V$ ($T\subseteq V$) to be the vertices, such that $\mathrm{indeg}(v)-1$ and $\mathrm{outdeg}(v)-1$ ($\mathrm{indeg}(v)-1$ and $\mathrm{outdeg}(v)$) fulfill the red degree restriction.
Note that this $M$, $S$ and $T$ are generally partially overlapping.
Further, if there is any vertex that is in none of the sets $M$, $S$ and $T$, the solution is infeasible.
\Cref{tab:degrees-overview} sums up in which sets a vertex is, based on its degrees,
\begin{table}[htb]
    \centering
    \begin{tabular}{|c|c|c|c|c|}
    \hline
    Indeg  & 0 & 1 & 2 & $\geq 3$ \\
    \hline
    Outdeg & & & &  \\
    \hline
    0 & non-feasible & non-feasible & $S$  & non-feasible \\
    \hline
    1 & non-feasible & non-feasible & $M$, $T$ & $T$ \\
    \hline
    2 & $S$  & $S$, $M$  & $S$, $M$, $T$ & $M$ \\
    \hline
    $\geq 3$ & non-feasible & $S$  & $M$  & non-feasible \\
    \hline
    \end{tabular}
    \caption{Combinations of in- and outdegrees and their implications on feasibility for \Cref{thm:k-knittable-DAGs-no-many-many}.}
    \label{tab:degrees-overview}
\end{table}

We generate a flow graph~$G_F$ out of~$G$. First, we split every vertex $v$ into two vertices $v_\mathrm{in}$, $v_\mathrm{out}$. If $v\in M$, we assign all incoming edges of $v$ to $v_\mathrm{in}$, all outgoing edges to $v_\mathrm{out}$, and we add a new directed edge from $v_\mathrm{in}$ to $v_\mathrm{out}$, to which we assign a minimum and a maximum flow of one (enforcing that precisely one unit of flow and thus one thread will contain $v$).
Note that if $v\notin M$, then either $v\in S$ or $v\in T$.
If $v\in S$, we only assign the outgoing edges of $v$ to $v_\mathrm{out}$ (with maximum capacity 1), but we omit assigning the incoming edges of $v$.
Similarly, if $v \in T$, we only assign the incoming edges of $v$ to $v_\mathrm{in}$ (with maximum capacity 1), but we omit assigning the outgoing edges of $v$.
We add two new source-vertices~$s_\mathrm{in}$ and~$s_\mathrm{out}$, and connect~$s_\mathrm{in}$ to~$s_\mathrm{out}$ via a directed edge with capacity of precisely~$k$ (ensuring that any valid flow will have a value of precisely~$k$, ensuring that precisely~$k$ threads are used).
Further, we add edges from~$s_\mathrm{out}$ to every vertex~$v \in S$ with maximum capacity~1.
Similarly, we add two new sink-vertices~$t_\mathrm{in}$ and~$t_\mathrm{out}$, and connect~$t_\mathrm{in}$ to~$t_\mathrm{out}$ via a directed edge with capacity of precisely~$k$.
Further, we add edges from every vertex~$v \in T$ to~$t_\mathrm{in}$ with maximum capacity~1.
All edges that were originally present in~$G$ get assigned a maximum capacity of~1.

Given an s-t-flow $F$ for $G_F$, we can extract $k$ threads that cover $V$ by following the flows that leave $s_\mathrm{out}$ until they arrive at $s_\mathrm{in}$, taking for every unit of flow all vertices $v$ for which this unit uses the edge $(v_\mathrm{in}, v_\mathrm{out})$ in this order.

Then, since for every $v\in V$ the minimum and maximum capacity of $(v_\mathrm{in}, v_\mathrm{out})$ is precisely 1, every vertex is covered by precisely one thread.
Also, we ensured that only vertices that are in $S$ can be the start of a thread, and only vertices that are in $T$ can be the end of a thread.
Further, since for every vertex $v \notin M$ we either added only outgoing edges or only incoming edges apart from the edges from/to $s_\mathrm{out}, t_\mathrm{in})$, none of these vertices can be in the middle of a thread.
Thus, the degree constraints are fulfilled for every vertex.

By taking a thread cover of $G$ of size $k$, we can construct a valid flow for $G_F$. \end{proof}

\subsection{Edge Cases}\label{edge_cases}
\begin{theorem}
    Each isomorphically distinct Hamiltonian path gives a different knit object. 
\end{theorem}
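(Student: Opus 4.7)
The plan is to prove the contrapositive: if two Hamiltonian paths $P_1, P_2$ on a knitting graph $G$ yield equivalent knit objects, then $P_1$ and $P_2$ must be related by a graph automorphism of $G$ (i.e.\ they are not isomorphically distinct). First, I would invoke the earlier theorem that every 1-knit object has a Hamiltonian path induced by traversing its yarn sequentially from one tail end to the other, together with the construction used to prove that every graph with a Hamiltonian path is 1-knittable: given a Hamiltonian path $P$ on $G$, one realizes $O(P)$ by forming the stitches in the order dictated by $P$ and then realizing each non-path edge $\{u,v\}$ as a loop-connection between the stitches $u,v$ in their induced order. This establishes a well-defined construction $P \mapsto O(P)$ from Hamiltonian paths of $G$ to knit objects whose underlying knitting graph is $G$.

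Next, I would argue that this construction is inverse to yarn extraction. By the earlier theorem, the yarn of $O(P)$ visits the stitches in exactly the order given by $P$, so tracing the yarn of $O(P)$ recovers $P$ (up to reversal, which corresponds to flipping the knit object end-to-end). Hence $P$ is canonically recoverable from $O(P)$ through its yarn graph.

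Now suppose $O(P_1) \cong O(P_2)$ as knit objects, witnessed by a knit-object isomorphism $\varphi$. Because $\varphi$ must carry the yarn of $O(P_1)$ to the yarn of $O(P_2)$ and the induced knitting graph of one onto that of the other, the restriction of $\varphi$ to vertices is a graph automorphism of $G$ that sends the edge sequence of $P_1$ to the edge sequence of $P_2$ (possibly reversed). This is precisely the statement that $P_1$ and $P_2$ are isomorphic as Hamiltonian paths in $G$, contradicting the hypothesis that they were isomorphically distinct.

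The main obstacle, and the place where care is required, is pinning down the correct notion of ``different knit object'' and ``isomorphically distinct Hamiltonian path'' so that the correspondence $P \leftrightarrow O(P)$ is truly bijective on equivalence classes. In particular, one must treat the reversal symmetry (swapping the two tail ends of the yarn, which yields an ambient-isotopic knit object but reverses the orientation of $P$) as part of the equivalence on both sides, and one must check that the loop-through-earlier-stitch rule in the construction depends only on the unoriented path together with its endpoints, so that no extra ordering information is smuggled into $O(P)$ beyond what a graph automorphism of $G$ can transport. Once these conventions are fixed consistently, the proof reduces to the yarn-recovery observation above.
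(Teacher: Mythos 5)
Your proof is correct and is essentially the contrapositive of the paper's argument: the paper argues directly that isomorphically distinct paths force some non-path edge to attach its loop at different positions along the yarn, while you recover the path from the yarn of the finished object and transport it through a knit-object isomorphism --- the same underlying observation that the Hamiltonian ordering and the loop-attachment positions determine one another. Your handling of the reversal symmetry and of what ``different knit object'' should mean is more careful than the paper's two-line sketch, but it is not a genuinely different route.
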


    \begin{proof}Given two isomophically distinct Hamiltonian paths, with nodes $H1_1 \dots H1_n$ and $H2_1 \dots H2_n$, there must be edges $(H1_k, H1_j)$ that correspond to $(H2_m, H2_n)$.  Since $k \ne m$ and $j \ne n$, the looped stitches will be attached in different places.  This gives two different knitted object.\end{proof}

    In terms of a DAG, only one Hamiltonian path is possible, which reduces the total number of knit objects.  However, if we do not have a DAG, the number reduces to finding Hamiltonian paths and is NP-Hard.

% \begin{theorem}
%     Determining an optimal row while disallowing connections along the row more than one step away is O(V)
% \end{theorem}

% \begin{theorem}
%     Determining an optimal row while optimizing connections more than one step away is NP-Hard
% \end{theorem}

% \begin{theorem}
% \end{theorem}

% 2. Graph Isomorphisms

% 3. Closed solids which don't leave flexibility for rows -- when is a knit-object fully specified. 

% Describe equivalence classes of knitting by row choices whether we allow or disallow edges parallel to hamiltonian path (implicitly crossings?) 

% Describe the limitations on row equivalence for plane graphs (you can only move by one stitch at each edge?)
\begin{theorem}
    The number of rows which can be knit [using basic knitting stitches] for a 1-knittable object which has a planar directed knitting graph can be computed in linear time. 
\end{theorem}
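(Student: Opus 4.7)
My plan is to reduce the row-counting problem to a simple traversal of the Hamiltonian path guaranteed by the earlier theorem on 1-knit objects. By that theorem, the yarn in a 1-knit object traces a Hamiltonian path through the directed knitting graph, and since the graph is a DAG (edge directions encode the knitting order), this path coincides with a topological sort and can be recovered in $O(|V|+|E|)$ time, exactly as in the earlier proof that 1-knittable DAGs can be identified in linear time.

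Once the Hamiltonian path is in hand, I would identify a \emph{row} with a maximal contiguous subpath consisting solely of type-(1) (blue) edges. The purple edges of the directed knitting graph are precisely the mixed edges at turn stitches, where the yarn reverses direction and a new row begins, so rows along the Hamiltonian path are delimited by purple edges. Consequently, the number of rows equals the number of purple edges traversed plus one, and a single sweep over the path computes this in $O(|V|)$ time. Equivalently, one can observe that the subgraph induced by blue edges is a disjoint union of directed paths (one per row) and simply count its connected components in $O(|V|+|E|)$.

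The main obstacle is justifying that this combinatorial characterization of ``row'' is correct for every class-0 planar directed knitting graph rather than merely plausible. I would argue this from the degree restrictions of Fig.~\ref{fig:degree-configs} together with planarity: these force the purple edges to occur only on the outer face, joining the last stitch of one row to the first stitch of the next, while any interior type-(1) edge cannot cross between rows without violating either planarity or the allowed degree distribution. A secondary issue is the circular-knitting edge case, in which rounds replace rows and no purple edges exist; this case is detected from the absence of purple edges and handled by partitioning the Hamiltonian path using the type-(2) (red) edges that ``close'' each round, which still runs in linear time overall.
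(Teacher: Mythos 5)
Your route is genuinely different from the paper's, and the difference matters. The paper never touches the purple labels: it takes the Hamiltonian (topological) order, appeals to the fact that a planar Hamiltonian graph admits a two-page book embedding with the Hamiltonian path on the spine \cite{Bernhart_book_1979}, assigns each non-Hamiltonian outgoing edge a side (``left''/``right'') according to its page, and declares a new row each time the side of a vertex's outgoing edges switches. In other words, the paper \emph{infers} the row boundaries from the planar structure alone. You instead \emph{read off} the row boundaries from the coloring: rows are the components of the blue subgraph, and their count is the number of purple edges plus one. If the coloring $C$ is taken as part of the input (the directed knitting graph is defined as $(V,E,C)$, so this reading is defensible), your algorithm is correct and even simpler than the paper's; but it makes the theorem nearly vacuous, since the purple edges are by definition the turn edges and hence already encode the answer.

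The gap appears when the colors are not given, which is the reading the paper's own proof supports. Then your argument is circular: identifying the purple edges requires knowing where the yarn turns, which is precisely the row decomposition you are trying to compute. Your fallback justification --- that planarity and the degree restrictions of Fig.~\ref{fig:degree-configs} force turn edges onto the outer face --- is the right instinct but is not yet an algorithm; to make it one you would need essentially the paper's book-embedding step, namely a canonical two-page layout along the Hamiltonian spine from which the side-switches (and hence the turns) can be located in a single linear sweep. Your handling of knitting in the round is a reasonable addition that the paper's proof glosses over, but it inherits the same circularity: detecting ``no purple edges'' presupposes the labeling you have not yet derived.
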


\begin{proof}
    Given a planar directed knitting graph corresponding with a 1-knittable object, we can follow a topological ordering (this will follow the Hamiltonian path of the needles through the object). Starting with our first source in the topological ordering, since the graph is planar, we can label all edges originating from this vertex which are not part of the hamiltonian path ``left". Now, we will proceed until a source edge appears on the opposite side which we will call ``right". Each time the side switches, we will add another row. Since the graph is planar and we have incremented rows each time, we have not violated the conditions for basic knitting. This can be seen if we look at book embeddings of planar, Hamiltonian graphs. It is well known that such graphs have a book embedding of 2 pages \cite{Bernhart_book_1979}, now whenever a node's outgoing edges switch sides, we define this as the next row in knitting.
\end{proof}

This does gives some ambiguous cases when nodes do not contain outgoing edges. Either a row switch or maintaining the same row would be acceptable, we would suggest keeping the row to minimize the number of row switches.

% \begin{definition}[rigid fabrics]
% rigid fabrics (maybe knitting in the round) are fabrics for which there is no row interchangeability. 
% \end{definition}

% \newpage
% \input{isaac24/sections/03a-examples}
% \newpage

% \input{isaac24/sections/04-results}

\section{Discussion and Limitations}\label{limitations}
% file for discussion of results and limitations of our assumptions -- here we can consider other assumptions that might make sense. 

We model knitting with a discrete structure, while knitting itself consists of a continuous yarn. This gives us a way to model the knitting, however, it may not model everything that we would want in the knitting. For example, the stretch of the knitting due to the yarn is not modeled in the graph form. A more sophisticated model will be necessary to study elasticity which is a direction for future work.

We create several classes of knitting, which we believe covers most knitting stitches. There are many different stitch varieties and while we believe all of them can fit into these categories there may be stitches that require a new category.

Our graph representation of knitting is referenced to the top of each knit loop which leaves an ambiguity about how the graph should be indexed especially with regards to the cast-on row and edges. 

% For our edge coloring, we use a special color to denote edges that are on the Hamiltonian path and a loop connection. Other ways to model this could include slip stitches, or preventing these edges from existing by skipping edge rows. 

While we discuss how knit and purl stitches are different and result in different orientations in the yarn graph, we do not explore how this affects the fabric. Looking at the reverse of the knit fabric and modeling the different sides with different edge orientations is an interesting future work.

Exploring how cables affect the space of the knitting is a future direction that can allow us to understand the topology of graphs as well.

Finally, double knitting and brioche are two techniques that have interesting structure and many knitting patterns and techniques associated with them. Creating graphs to model these specific stitches can give us a better understanding of knitting and possible yarn connections.

\section{Conclusions and Future Directions}\label{conclusion}

This paper looks at different models of knitting using graphs. We have defined several categories of knitting that correspond to different types of graph. These categories also follow general knitting wisdom of complexity. From this point, we determine several properties of graphs that are necessary for them to be knitting graphs. We discuss the uniqueness of knitting graphs and how many knitting patterns we could have from one graph.

While we look at each of these things for some of our classifications, there are many open questions in this direction. For example, double knitting has some very strict structural requirements. Defining and modeling these in terms of the necessary covers and node connections is an interesting problem. Likewise, brioche has complex yarn and knitting connections that would lead to some interesting algorithms. Finally, measurement of elasticity of knit fabric is a well motivated future work which we hope can be built on top of our framework with some additional modeling information.

\bibliography{bib}

\end{document}